\pgfplotsset{compat=1.15}
\theoremstyle{plain}
\newtheorem{Theorem}{Theorem}   [section]
\newtheorem{Ruleset}[Theorem] {Ruleset}
\newtheorem{Open}{Open}
\newcommand{\craig}[1]{\textcolor{purple}{#1}}
\newcommand{\cclass}[1]{\ensuremath{\mathord{\rm #1}}}
\newcommand{\ruleset}[1]{\textsc{#1}}
\newcommand{\rsTafl}{\ruleset{Hnefatafl}}
\newcommand{\rsFCT}{\ruleset{Forced-Capture Hnefatafl}}
\newcommand{\rsBCL}{\ruleset{B2CL}}
\newcommand{\outcomeClass}[1]{\ensuremath{\mathcal{#1}}}
\newcommand{\outN}{\outcomeClass{N}}
\begin{document}

\title{Forced Capture Hnefatafl}
\author{Kyle Burke $^1$, Craig Tennenhouse $^2$}
\date{
$^1$Florida Southern College, Lakeland, FL 33801, USA\\{\tt kburke@flsouthern.edu}\\
$^2$University of New England, Biddeford, ME 04005, USA\\{\tt ctennenhouse@une.edu}
}

\maketitle

\begin{abstract}
We define a new asymmetric partizan loopy combinatorial game, \rsFCT, similar to \rsTafl, except that players are forced to make capturing moves when available.  We show that this game is \cclass{PSPACE}-hard using a reduction from \ruleset{Constraint Logic}, making progress towards classifying the computational complexity of proper \rsTafl.
\end{abstract}

\section{Introduction}\label{sec:intro}

\begin{Ruleset}

\rsTafl\ is a combinatorial game played on the spaces of a $(2n+1) \times (2n+1)$ grid, where $n$ is a positive odd integer.  The center space is known as the \emph{throne}; the four corners are known as the \emph{havens}.  The Attacking player begins with $2k$ \emph{soldiers}, usually represented by black pieces.  The Defending player begins with one \emph{King} and $k$ soldiers, usually represented by white pieces\footnote{In Linneaus's notes, the Defenders are referred to as the Swedes and the Attackers as the Muscovites.\cite{von1811lachesis}}.  The King is the only piece that can be moved onto or through the throne and havens.  A starting configuration consists of the King on the throne, the Defenders surrounding it in a tight diamond pattern.  The Attackers begin in four groups, each often resembling an isosceles triangle along each board edge.  No pieces start on the havens.  

Each turn, a player chooses one piece to move through as many empty spaces as desired in one orthogonal direction (as a rook moves in \ruleset{Chess}, except it can't move onto an opposing piece's space).  If the moved piece ends adjacent to an opposing piece (\emph{target}), and also adjacent to the target piece on the opposite side of the moved piece is either (1) another piece of the current player, or (2) a haven, then the target is captured and removed from the board\footnote{This is known as a ``Hammer and Anvil'' capture move, as there are other forms of capture in variants.  The piece that moved is the Hammer and the haven or other piece that didn't move is the Anvil.  Across other rulesets, this is known as \emph{custodial capture}}. 

If the King moves onto any of the havens or if all the Attacker's pieces are captured, then the Defender wins.  If the King is captured, then the Attacker wins.  (In other words, for all-small versions of these rulesets there are no options for either player once any of those three conditions are met.)  A player may also lose if none of their pieces can move.
\end{Ruleset}

An important aspect of these rules is that a piece moving in between two opposing pieces does not cause a ``self-capture''; a piece can only be captured as a result of an opposing move.

Of important note is that \rsTafl\ is not a symmetric game; the Attacker has nearly twice as many pieces, but their victory condition is (arguably) more difficult. Experimentally, most variants of Tafl games favor the Defender (see \cite{taflBalance}), and hence the Attacker tends to be given the first move.

Since \rsTafl\ is a reconstruction of a historic game, the exact rules played originally are uncertain.\cite{murray2015history} \cite{walker2014reconstructing}  Indeed, regional variants by the names of \ruleset{Brandubh}, \ruleset{Tablut}, \ruleset{Tafl}, \ruleset{Ard R\'{i}}, and more employ additional rules that overlap from one game to the next (and may or may not be consistent).  The main result of this paper hold with the inclusion of any combination of the following rule variations.  

\begin{itemize}
    \item \textbf{Traps}: If a piece is completely surrounded by opposing pieces, edges of the board, havens and/or the throne, then it is captured and removed from the board.
    \item \textbf{King on the throne}: If the King is on the throne, then adjacent Defenders cannot be captured.  If they would be removed from the board, they remain instead.
    \item \textbf{Throne is/not an anvil}: The throne can (or cannot) be used as ``the anvil'' in a capture.
    \item \textbf{King can't capture}: The King either cannot be the hammer or cannot play any part in a capture. 

\end{itemize}

Which pieces are permitted to move onto and through the throne differs between variants, as well.  Below are some less restrictive variations.
    \begin{itemize}
        \item Defender's pieces can move through the throne, but cannot end their movement on it.
        \item Attacker's and Defender's pieces can move through the throne, but cannot be placed on it.
        \item Defender's pieces can move through the throne and end their movement on it.
        \item All pieces can move through and on the throne as through it were any normal space on the board.
        \item Only the King may move onto and through the throne. 
        \item The King may move through the throne, but not onto that space after it has left it.
    \end{itemize}

Other rule variations that don't work (immediately) with the results here include different victory conditions.  In some rules, the King must be captured in a trap (surrounded on all sides as listed above), as outlined in the first English translation of Linnaeus's journal on his tour of Lapland in 1732 \cite{von1811lachesis}. Sometimes this is amended so that the trap cannot include edges of the board.  Another common variant removes all havens and allows the King to escape (win) by reaching any edge of the board.

A very recent development proves the \cclass{EXPTIME}-hardness of the symmetric kingless \ruleset{Custodial Capture} game \cite{a14030070}.  This game is similar to \rsTafl, except there are neither havens nor escape points.  Instead, a player wins after they have captured a total of $k$ soldiers of the opponent.  Additionally, two soldiers can capture as many opposing soldiers as are between them (in an straight path with no gaps).

We go in a different direction and study a new variant of \rsTafl\ that requires players to make capturing and immediate-winning moves when available.

\begin{Ruleset}
\rsFCT\ is the same ruleset as \rsTafl, except that when a player has at least one option that results in a capture or the immediate end of the game, then the player must move to one of those options.
\end{Ruleset}

This variation is not without precedent, as the traditional ruleset for \ruleset{Checkers} (also known as \ruleset{Draughts}) requires a player to capture if possible.  Additionally, \ruleset{Dawson's Chess}, another academic combinatorial game, has this requirement\cite{WinningWays:2001}.  With this change, we show (in Section \ref{sec:complexity}) that determining the winner of this \rsFCT\ game is \cclass{PSPACE}-hard.

\section{Playability}
\label{sec:playability}

Practically, it is easy for both players to miss forced moves; one player may make a non-capture move when a capture move is available and the opponent does not notice the infraction.  To this end, we programmed a playable version at \url{http://kyleburke.info/DB/combGames/forcedCaptureHnefatafl.html}.  This implementation prevents such mistakes.  At the time of this writing, it uses the following rules:
\begin{itemize}
  \item The Throne is not an Anvil.
  \item Only the King can move onto and through the throne.
  \item Attackers go first.
  \item The King can be captured on the throne.
  \item The King can be captured in a Hammer and Anvil attack.
\end{itemize}

In \rsTafl, including many of the rule variations that make the King more powerful causes a strong bias in favor of the defenders, to the extent that ``little proficiency needs to be gained before the game becomes too unbalanced to entertain two moderately intelligent players'', (\cite{walker2014reconstructing}, pg 15).  We choose the above rules in an effort to thwart that bias.  Nevertheless, based on our limited plays, if the attackers do not win quickly, then it's easier for the King to escape as more pieces leave the board.

\begin{figure}[h!]
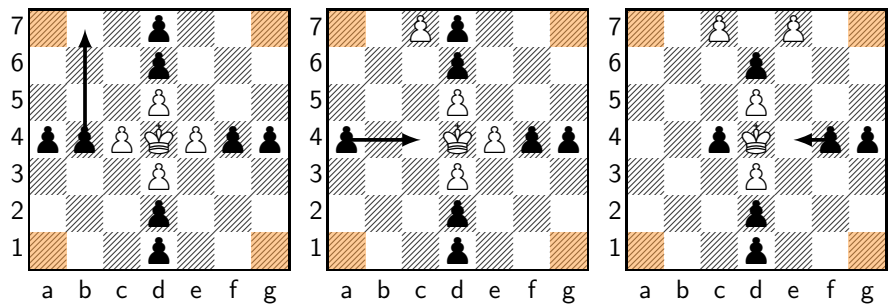
\begin{center}
    {\setchessboard{boardfontsize=14pt}
    \begin{subfigure}[b]{.3\textwidth}\begin{center}
        \chessboard[
            maxfield=g7,
            setfen=3p3/3p3/3P3/ppPKPpp/3P3/3p3/3p3,
            pgfstyle=straightmove,
            markmoves={b4-b7},
            showmover=false,
            color=orange!80,
            pgfstyle=color, 
            opacity=.5,
            markregion={g7-g7, a1-a1, a7-a7, g1-g1}
        ]
    \end{center}\end{subfigure}
    \hspace{.1cm}
    \begin{subfigure}[b]{.3\textwidth}\begin{center}
        \chessboard[
            maxfield=g7,
            setfen=2Pp3/3p3/3P3/p2KPpp/3P3/3p3/3p3,
            pgfstyle=straightmove,
            markmoves={a4-c4},
            showmover=false,
            color=orange!80,
            pgfstyle=color, 
            opacity=.5,
            markregion={g7-g7, a1-a1, a7-a7, g1-g1}
        ]
    \end{center}\end{subfigure}
    \hspace{.1cm}
    \begin{subfigure}[b]{.3\textwidth}\begin{center}
        \chessboard[
            maxfield=g7,
            setfen=2P1P2/3p3/3P3/2pK1pp/3P3/3p3/3p3,
            pgfstyle=straightmove,
            markmoves={f4-e4},
            showmover=false,
            color=orange!80,
            pgfstyle=color, 
            opacity=.5,
            markregion={g7-g7, a1-a1, a7-a7, g1-g1}
        ]
    \end{center}\end{subfigure}}
\caption{Winning strategy for the Attacker, going first from the \ruleset{Brandubh} starting position.  In (a), the attacker moves to b7, forcing the defender to move c4 to c7.  In (b), the attacker moves to c4.  The defender is forced to move e4 to e7 to capture the soldier at d4.  In (c), the black pawn moves to e4 to capture the king.}
\label{fig:brandubhWin}
\end{center}\end{figure}

However,  the attacker has a winning strategy from the \ruleset{Brandubh} start in only three moves, as shown in Figure \ref{fig:brandubhWin}.  Thus, this position is in $\outcomeClass{L} \cup \outN$.  We postulate that other historic starting positions may also have very succinct winning strategies.

\section{Computational Complexity}
\label{sec:complexity}

\begin{Theorem}[\rsFCT is \cclass{PSPACE}-hard]
\label{thm:hardness}
It is \cclass{PSPACE}-hard to determine whether the current player has a winning strategy in \rsFCT.  This is true both when the current player is the Attackers and the Defenders.
\end{Theorem}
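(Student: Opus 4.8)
The plan is to reduce from \rsBCL\ (Bounded Two-Player Constraint Logic), the provably terminating two-player game on constraint graphs that Hearn and Demaine showed to be \cclass{PSPACE}-complete, using the standard gadget template for constraint-logic reductions. Recall that an instance is a graph whose edges have weight $1$ or $2$, each vertex carrying the constraint that the total weight of edges oriented into it is at least $2$; the edges are partitioned between White and Black; on a turn the player to move reverses one of their own edges without violating any vertex constraint; White wins the instant a distinguished edge $e_W$ attains a target orientation, Black wins the instant $e_B$ does, and, the game being bounded, a terminal position with neither condition met is scored as a loss for a fixed side. I would build, in polynomial time, a \rsFCT\ position whose board splits into a large \emph{logic region} that mirrors the constraint graph and a small \emph{verdict region} holding the King, the havens, and just enough Attacker soldiers for each victory condition; the two regions communicate only through two \emph{trigger} channels, one carrying $e_W$ and one carrying $e_B$.

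\textbf{Gadgets.} The building blocks are: (i) \emph{wires} --- narrow corridors in which the position of a single soldier, or a short rigid chain of soldiers, encodes the orientation of an edge, engineered so that a capture forced at one end propagates, through a cascade of forced recaptures, to a capture at the other end, thereby simulating an edge reversal; (ii) \emph{AND} and \emph{OR} vertex gadgets, compact arrangements of soldiers and immovable barriers (built from board edges and soldiers pinned so as to have no useful move, even under the Traps variant) at which wires meet, so that exactly the inflow-respecting reversals are playable --- an illegal reversal is simply not a legal \rsFCT\ move, while the forced-capture rule prevents the intended ones from being skipped; (iii) a \emph{turn gadget} that toggles which player currently has access to the active portion of the logic region, so the strict alternation of \rsFCT\ lines up with that of \rsBCL, each \rsBCL\ move being realized by a bounded, largely forced block of \rsFCT\ moves, padded with throwaway forced captures to fix parity; and (iv) the \emph{trigger gadgets}: once $e_W$ reaches its target orientation a forced capture sequence clears a lane through the verdict region so the Defender can march the King to a haven (or remove the last Attacker soldier) on the next move, and symmetrically $e_B$ yields a forced King capture for the Attacker. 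Every gadget has constant size and there are $O(|V|+|E|)$ of them, so the board dimensions and piece counts remain polynomial; the required $(2n+1)\times(2n+1)$ shape, and any combination of the listed rule variations, are achieved by padding with walled-off inert area.

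\textbf{Correctness and the two-sided claim.} One then argues by induction on the remaining \rsBCL\ move budget that White has a winning strategy from a constraint-graph configuration $C$ if and only if the side playing White's role has one from the encoded position: a legal \rsBCL\ move corresponds to the unique forced \rsFCT\ continuation available in the matching turn phase, an illegal move has no \rsFCT\ counterpart, and exhausting the budget is arranged to leave the White-role player with no way to avoid losing, via a monotone ``counter'' gadget that can only be decremented; the loopiness of \rsFCT\ is used only to let the designated loser stall harmlessly. Assigning White's role to the Attacker (resp.\ Defender) and, if a parity fix is needed, prepending a single forced opening move so that the Attacker (resp.\ Defender) is to move, one obtains in each case a position in which the current player wins if and only if White wins the underlying \rsBCL\ instance. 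Since \rsBCL\ is \cclass{PSPACE}-complete, this gives \cclass{PSPACE}-hardness of determining the winner in \rsFCT\ both when the Attacker is to move and when the Defender is to move.

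\textbf{Main obstacle.} The delicate point throughout is \emph{containment}. Because captures and immediately-winning moves are compulsory, a single stray capture anywhere on the board can hijack the intended line, so the construction must maintain the invariant that in every reachable position the only available capturing or winning moves are precisely the ones the simulation intends. This forces idle wires and vertex gadgets to present \emph{no} capture to either player, forced cascades to terminate exactly where planned and nowhere else, and the verdict region to stay completely frozen --- no King move, no Attacker capture --- until a trigger fires; it also makes the limited repertoire of ``wall'' primitives available in \rsTafl\ (essentially board edges plus soldiers that must be kept immobile without being trapped) a genuine design constraint. Verifying this invariant across the simultaneous interaction of all the gadgets, and separately ruling out that either player can exploit repetition in this loopy game to sidestep a losing line rather than be driven through the simulation, is where essentially all of the technical effort lies.
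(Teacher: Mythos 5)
Your choice of source problem (\rsBCL) matches the paper's, but what you have written is a specification of what gadgets should do rather than a construction of them, and the missing content is exactly where you yourself locate the difficulty: containment under the forced-capture rule. The paper does not simulate general constraint-graph play move-for-move. It uses the fact that \rsBCL\ is hard already for the instances arising from Schaefer's \ruleset{Positive CNF} game, so it only needs Variable, Fanout, Choice, AND, OR, Wire (diode/turn) and Victory gadgets, and the induced play has a very rigid shape: the only free choices are which Variable gadgets each player claims (plus White's routing in Choice gadgets); every other move is a capture forced by the preceding one, so each White activation is answered by exactly one forced Black reply, turn parity is automatic, and the moment White can no longer offer a capture, Black gets a free move and wins within two turns (or, in the Attacker-Victory version, the King escapes). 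The two-sided claim is then obtained almost for free by swapping colors and exchanging one Victory gadget for the other, since all non-Victory gadgets are color-symmetric.

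Your plan instead calls for a turn gadget that toggles which player ``has access'' to the logic region, a monotone counter for budget exhaustion, and wires in which a forced-capture cascade ``simulates an edge reversal.'' None of these is exhibited, and the latter two are genuinely problematic in \rsFCT: captures are irreversible, so a cascade consumes the wire and cannot represent an orientation that later vertex gadgets must still be able to test, and any point at which a player is meant to have a free, unforced choice inside the logic region is in tension with the invariant you state yourself --- that at every moment the opponent faces exactly the one intended capture --- because the forced-capture rule otherwise lets play be hijacked or lets the opponent decline to cooperate. The paper's design dissolves this tension by making every Black move after variable selection forced; your proposal leaves the corresponding synchronization mechanism (and hence the heart of the argument) undefined, so as it stands it is a plausible research plan rather than a proof.
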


\begin{proof}

We reduce from \ruleset{Bounded 2-Player Constraint Logic} (\rsBCL).  In order to complete the reduction, it is sufficient to find gadgets for Variables, Fanout, Choice, AND, OR, Victory, and Wires to connect the gadgets \cite{DBLP:books/daglib/0023750}.  (This is because these Constraint Logic gadgets are enough to show that \rsBCL\ is \cclass{PSPACE}-complete, using the reduction from \ruleset{Positive-CNF} \cite{DBLP:books/daglib/0023750, DBLP:journals/jcss/Schaefer78}.)  

We build these gadgets to model winnability from the start for both the Attacker and Defender perspectives.  Since all soldiers follow the same rules, most of our gadgets work for both players.  We will only need team-specific situations in the victory gadgets.  

\subsection{Reduction Overview}\label{subsec:overview}
The overview of the reduction follows.  Players will begin by choosing variables.  For each variable chosen by White, that activates that gadget's output.  In our \rsFCT\ gadgets, this means that a white soldier will be able to move from one gadget to capture a black soldier in another.  In each gadget, when an input becomes activated by such a move, Black will immediately have to respond in that gadget because a capturing move will have been created.  Then (depending on whether there are other inputs) there will be a sequence of moves so that White activates the output(s), meaning a white soldier leaves that gadget to activate other gadgets.  In the case of the Fanout gadget, two outputs will be activated, so two white soldiers will be able to leave the gadget.

White and Black will make all possible plays on the variables first.  As stated above, the variables White chooses will have activated outputs.  The variables that Black chooses will have inactive outputs.  Since the reduction comes from a game on formulas without negations, there is no detriment to White activating a variable.  

After all variables are chosen, the only remaining difficult decisions for the players are for White to choose the correct path in the choice gadgets.  Otherwise, White will activate all the gadgets they can.  Every activation play has a responding Black play, so after all activations are made, either the Victory gadget will be activated (meaning White will have won or be able to win shortly thereafter) or the Victory gadget will remain inactive.  In this latter case, it will be White's turn, but they will be unable to prevent Black from winning in Black's next two turns.

We continue the proof by describing how to implement these gadgets in \rsFCT.

\subsection{Wire Gadget}\label{subsec:wire}

The most basic gadget, our Wire gadget (shown in Figure \ref{fig:wireTurn}) fulfills three purposes:
\begin{itemize}
  \item Provides a connection between all other gadgets,
  \item Turns the signal 90 degrees, which allows us to line up inputs to outputs as necessary, and
  \item Acts as a diode, preventing meaningful backwards moves from being made.
\end{itemize}

\begin{figure}[h!]
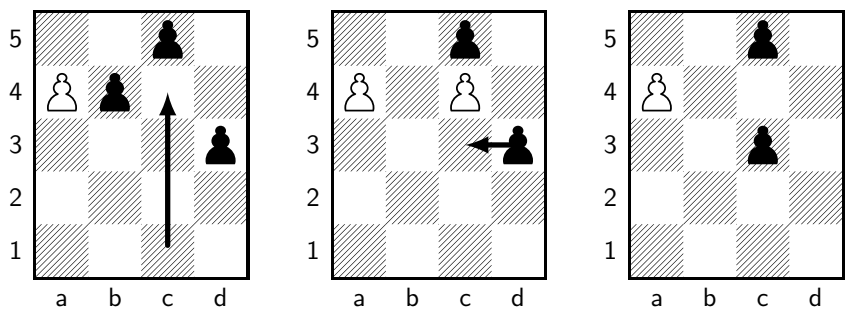
\begin{center}
    \begin{subfigure}[b]{.3\textwidth}\begin{center}
        \chessboard[
            maxfield=d5,
            setfen=2p1/Pp2/3p/4/4,
            pgfstyle=straightmove,
            markmoves={c1-c4},
            showmover=false
        ]
        \caption{A white soldier moves in from the bottom to c4 to make a capture.}
    \end{center}\end{subfigure}\hspace{.2cm}
    \begin{subfigure}[b]{.3\textwidth}\begin{center}
        \chessboard[
            maxfield=d5,
            setfen=2p1/P1P1/3p/4/4,
            pgfstyle=straightmove,
            markmoves={d3-c3},
            showmover=false
        ] 
        \caption{Black responds with their own capturing move to c3.}
    \end{center}\end{subfigure}\hspace{.2cm}
    \begin{subfigure}[b]{.3\textwidth}\begin{center}
        \chessboard[
            maxfield=d5,
            setfen=2p1/P3/2p1/4/4,
            pgfstyle=straightmove,
            addpgf={\tikz[overlay]\draw[black,line width=0.05em,-Stealth,dashed](a4)--(d4);},
            showmover=false
        ] 
        \caption{The remaining white soldier is free to move into the next gadget.}
    \end{center}\end{subfigure}
\caption{The wire gadget.  If and only if the input is activated (the white soldier moves in from below) then the white soldier in a4 will be able to activate a gadget positioned to the right in it's path.}
\label{fig:wireTurn}
\end{center}\end{figure}

The wire is activated when a white soldier enters in column c, marked with an arrow.  Black is then forced to make the responding capturing move, and then White's other soldier is free to move on to the another gadget positioned to the right.  White will be forced to make that move, since moving to the receiving gadget will result in a capture. 

All other gadgets have been carefully designed so as not to allow for wire pieces to interact out of their designated turn, and the placement of wire gadgets between all other gadgets prevents these from interacting with each other directly.  
\ifthenelse{\boolean{includesAppendix}}{(We show all combinations of wires and gadgets in figures in Appendix \ref{sec:appendix}.)}{(We include figures showing all combinations of wires and gadgets in the appendix in the full version of this paper.)}  
Because of this property, in our construction, we will make sure there is at least one wire gadget between the input and output of any other gadgets.

In Figure \ref{fig:diode}, we show an additional benefit to the wire gadget: it acts as a diode, meaning that White cannot play backwards through the gadgets.  

\begin{figure}[h!]
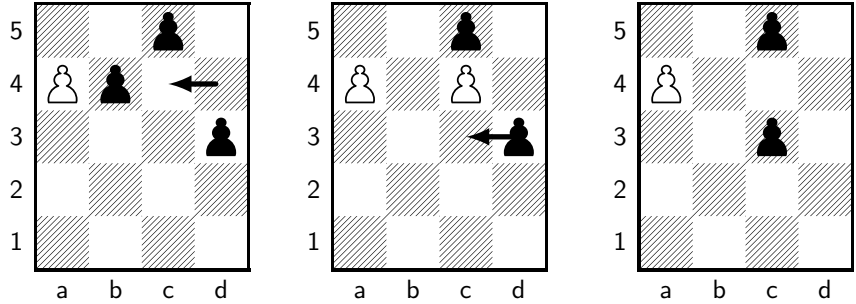
\begin{center}
    \begin{subfigure}[b]{.3\textwidth}\begin{center}
        \chessboard[
            maxfield=d5,
            setfen=2p1/Pp2/3p/4/4,
            pgfstyle=straightmove,
            markmoves={d4-c4},
            showmover=false
        ]
        \caption{If White enters from the right, they can still capture.}
    \end{center}\end{subfigure}\hspace{.2cm}
    \begin{subfigure}[b]{.3\textwidth}\begin{center}
        \chessboard[
            maxfield=d5,
            setfen=2p1/P1P1/3p/4/4,
            pgfstyle=straightmove,
            markmoves={d3-c3},
            showmover=false
        ]
        \caption{Black captures as when the wire is played normally.}
    \end{center}\end{subfigure} \hspace{.2cm}
    \begin{subfigure}[b]{.3\textwidth}\begin{center}
        \chessboard[
            maxfield=d5,
            setfen=2p1/P3/2p1/4/4,
            pgfstyle=straightmove,
            addpgf={\tikz[overlay]\draw[black,line width=0.05em,-Stealth,dashed](c4)--(c1);},
            showmover=false
        ] 
        \caption{White is not in the correct column to enter the other gadget.}
    \end{center}\end{subfigure}
\caption{The turn in a wire acts as a diode.  If White can ever push a signal backwards, they will not be able to continue that signal past the turn in the wire.}
\label{fig:diode}
\end{center}\end{figure}

\subsection{Victory Gadgets}\label{subsec:victory}

Next we describe the victory gadgets, which we will need two of: one when the True player is the Defenders, and one when True is the Attackers.  (All other figures assume True is the Defenders player.  To switch, just change the color of the pieces in those other gadgets.)  We begin with the Defender Victory gadget in Figure \ref{fig:defenderVictoryNew}.  White, as the defender, will be able to win if they can activate the gadget.  However, if Black ever has a turn where they are not compelled to make a capture, they can set up a capture on White's King that White cannot avoid. Black has such a turn if White is unable to activate the Victory gadget.

\begin{figure}[h!]
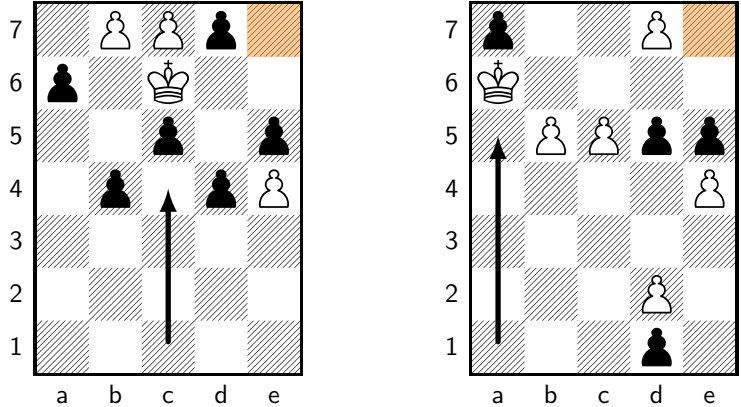
\begin{center}

    \begin{subfigure}[b]{.45\textwidth}\begin{center}
        \chessboard[
            maxfield=e7,
            setfen=1PPp1/p1K2/2p1p/1p1pP/5/5/5,
            pgfstyle=straightmove,
            markmoves={c1-c4},
            showmover=false,
            addpgf={\tikz[overlay]\draw[black,line width=0.05em,-Stealth,dashed](c6)--(e6);},
            addpgf={\tikz[overlay]\draw[black,line width=0.05em,-Stealth,dashed](e6)--(e7);},
            color=orange!80,
            pgfstyle=color, 
            opacity=.5,
            markregion={e7-e7}      
        ]
        \caption{Defender Victory gadget.}
        \label{fig:defenderVictoryNew}
    \end{center}\end{subfigure}
    \hspace{.1cm}
    \begin{subfigure}[b]{.45\textwidth}\begin{center}
        \chessboard[
            maxfield=e7,
            setfen=p2P1/K4/1PPpp/4P/5/3P1/3p1,
            pgfstyle=straightmove,
            markmoves={a1-a5},
            showmover=false,
            addpgf={\tikz[overlay]\draw[black,line width=0.05em,-Stealth,dashed](a6)--(e6);},
            addpgf={\tikz[overlay]\draw[black,line width=0.05em,-Stealth,dashed](e6)--(e7);},
            color=orange!80,
            pgfstyle=color, 
            opacity=.5,
            markregion={e7-e7}    
        ]
        \caption{Attacker Victory gadget.} 
        \label{fig:attackerVictoryNew}
    \end{center}\end{subfigure}
    \caption{The two victory gadgets.  The orange boxes indicate the haven in the upper right-hand corner of the board.  In (a), White can win exactly when the gadget is activated.  In (b), Black can win if they can activate the gadget.  If they can, they take the King immediately.  If they cannot, the King can reach the haven in two moves.}
\end{center}\end{figure}

If the gadget is activated by a defending soldier entering in column c, capturing the soldiers at c5 and d4, then the attacking soldier at d7 must move to d4 to capture the new defender. The king must then move to e6 to capture the attacker at e5.  Since the attacker pawn that was at c5 is gone, Black cannot prevent the king from escaping during the next move.

If the king tries to move without a defending soldier at c4, then it will be captured easily. It will have to move to e6 to capture and the soldier at c5 will then slide over to capture the King against the Haven.  If the king doesn't move, then Black can capture it in two moves.  White may attempt to intervene by moving a soldier from elsewhere on the board into either column a or e to prepare to move to row 6, but Black can start by moving a soldier to the opposite side of the king (columns d and b respectively) and then capture on their next turn.

In the same vein, we have our Attacker Victory gadget, shown in Figure \ref{fig:attackerVictoryNew}. In this gadget, if Black activates the gadget, then they win by capturing the King. If White moves the King to capture the piece at e5 while Black still has other capture moves, then the attacker at e6 can move to e5, taking the King. If Black runs out of capturing moves in other gadgets, then their only capturing move on the board is d5 to d3, after which White can move the King to e6 without fear of capture, then win on the next turn. Thus, the Attacker Victory gadget results in White winning if and only if an attacking soldier enters in column a.


Whichever of the victory gadgets is used, the player acting as True (from the original \ruleset{Positive CNF} position, or White from the \rsBCL\ position we directly reduce from) needs to activate it in order to win.  If they cannot end their turn in a situation where False is not forced to make a capture, then False can win in two moves.  (For the remainder of our gadgets, we return to using white pieces for the White (True) \rsBCL\ player.)

\subsection{Variable Gadget}\label{subsec:variable}

The Variable gadget allows one of the two players to make a move in it.  Whichever player makes that move decides the value of the corresponding variable.  (As mentioned earlier, the \rsBCL\ reduction is from \ruleset{PositiveCNF}, so there are no negations in the variables and each player can only set variables to their own value.)  We represent a positive value for a variable (or on any wire) with a free white soldier that can move up and out of the gadget on a specific row or column on the game board.

\begin{figure}[h!]
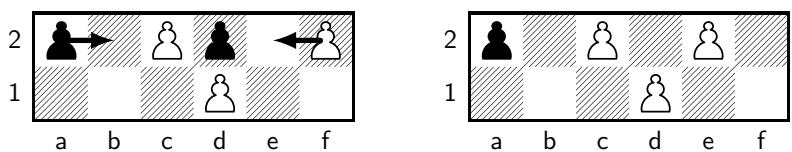
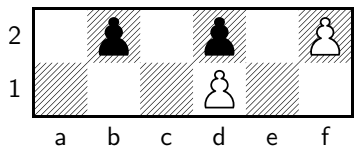
\begin{center}
    \begin{subfigure}[b]{.45\textwidth}\begin{center}
        \chessboard[
            maxfield=f2,
            setfen=p1Pp1P/3P2,
            pgfstyle=straightmove,
            markmoves={f2-e2,a2-b2},
            showmover=false
        ]
        \caption{Before either player has moved, both have a capture move option.}
    \end{center}\end{subfigure}
    \hspace{.1cm}
    \begin{subfigure}[b]{.45\textwidth}\begin{center}
        \chessboard[
            maxfield=f2,
            setfen=p1P1P1/3P2,
            pgfstyle=straightmove,
            showmover=false
        ]        
        \caption{If White moves on the gadget, their soldier is free to move out.}
    \end{center}\end{subfigure}
    
    \begin{subfigure}[b]{\textwidth}\begin{center}
        \chessboard[
            maxfield=f2,
            setfen=1p1p1P/3P2,
            pgfstyle=straightmove,
            showmover=false
        ]        
        \caption{Black plays on the gadget.  The lower White soldier cannot move up.}
    \end{center}\end{subfigure}
\caption{Variable gadget.  The variable is set to True if the bottom white soldier can move straight up along the arrow.  The True/White player can make this happen if they play on it first by moving their rightmost soldier to the left, capturing the middle black soldier.  Black can block the soldier output by moving their leftmost soldier to the right to capture.}
\label{fig:variable}
\end{center}\end{figure}

In our variable gadget, as shown in Figure \ref{fig:variable}, if White plays, they can move their rightmost soldier to the left and capture the middle black soldier, freeing up the lower soldier to move along the arrow.  If Black plays first, they can move their leftmost soldier to the right and capture the middle white soldier, keeping the lower soldier blocked.

Black cannot start making plays until after all variable gadgets have been chosen.  If White decides to play on any of the wire gadgets before all variables are chosen, this also cannot help them win and could cause them to lose.  This is because Black can respond by claiming extra variables instead of responding on the wires.  White, without a move on the wire they played on, now has to respond on a variable.  All this has accomplished for White is letting Black choose a variable before them.  Since there are no negations, this cannot be advantageous to White.

\subsection{Remaining Gadgets}\label{subsec:remaining}

Some signals need to be split into two, so we use the Fanout gadget, shown in Figure \ref{fig:fanout}, to accomplish this.  If the input is activated in column b, then the soldier at c4 is captured, followed by Black moving from a3 to b3. Then, d4 can exit to the left, and c2 can exit upwards to activate other gadgets.

\begin{figure}[h!]
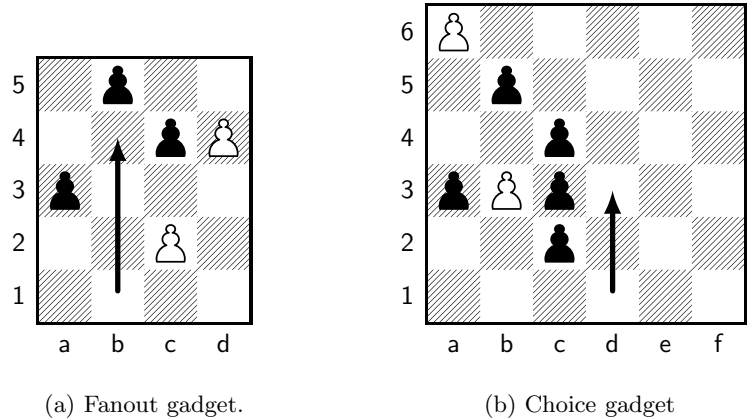
\begin{center}
    \begin{subfigure}[b]{.45\textwidth}\begin{center}
        \chessboard[
            maxfield=d5,
            setfen=1p2/2pP/p3/2P1/4,
            pgfstyle=straightmove,
            markmoves={b1-b4},
            showmover=false,
            addpgf={\tikz[overlay]\draw[black,line width=0.05em,-Stealth,dashed](c2)--(c5);},
            addpgf={\tikz[overlay]\draw[black,line width=0.05em,-Stealth,dashed](d4)--(a4);}
        ]
        \caption{Fanout gadget.}
        \label{fig:fanout}
    \end{center}\end{subfigure} 
    \hspace{.1cm}
    \begin{subfigure}[b]{.45\textwidth}\begin{center}
        \chessboard[
            maxfield=f6,
            setfen=P5/1p4/2p3/pPp3/2p3/6,
            pgfstyle=straightmove,
            markmoves={d1-d3},
            showmover=false,
            addpgf={\tikz[overlay]\draw[black,line width=0.05em,-Stealth,dashed](d3)--(d6);},
            addpgf={\tikz[overlay]\draw[black,line width=0.05em,-Stealth,dashed](d3)--(f3);}
        ]
        \caption{Choice gadget}
        \label{fig:choice}
    \end{center}\end{subfigure} 
    \caption{Fanout and Choice gadgets.  In (a), The incoming white soldier takes the black soldier at c4.  Black responds by moving to b3, allowing the white soldiers at both d4 and c2 to activate.  In (b), the incoming white soldier either continues through to the output wire above or stops at d3 and captures the black soldier at c3.  Either c2 or c4 moves to c3 to capture the white soldier at b3.  Then that original white soldier can continue up or exit to the right.}
    
\end{center}\end{figure}

In the Choice gadget, shown in Figure \ref{fig:choice}, there are two possible outputs leading to wires, only one of which can be activated.  When a Choice gadget is activated, White can either decide to stop in the gadget at d3 or pass through the gadget and activate the wire gadget for the output above.

If the white soldier stops at d3, Black responds by moving either of their other soldiers in column c to the now-empty c3.  The white soldier is now free to activate the output either above or to the right, but not both.

In our AND gadget, as shown in Figure \ref{fig:AND}, the white soldier at a9 can move to the right only after both inputs are activated and Black makes the two responding moves.  The first white soldier to enter in row 3 captures the black soldier at c3.  The black soldier at c4 then moves into that now-empty space and captures the white soldier at d3.  The empty spot at d3 means a white soldier can now move in to that cell, which captures the black soldier at c3.  Black has one more move: the soldier at c9 moves down to c3 and captures the soldier at d3.  White is then free to move the white soldier at a9 and activate another gadget.

\begin{figure}[h!]
    \begin{center} 
    {\setchessboard{boardfontsize=15pt}
        \begin{subfigure}[b]{.45\textwidth}\begin{center}
            \chessboard[
                maxfield=e9,
                setfen=P1p2/5/5/5/3p1/2p2/2pPp/5/2P2,
                pgfstyle=straightmove,
                markmoves={a3-b3,d1-d3},
                showmover=false,
                addpgf={\tikz[overlay]\draw[black,line width=0.05em,-Stealth,dashed](a9)--(e9);}
            ]
            \caption{AND gadget}
            \label{fig:AND}
        \end{center}\end{subfigure}
        \hspace{.1cm}
        \begin{subfigure}[b]{.45\textwidth}\begin{center}
            \chessboard[
                maxfield=f9,
                setfen=1P1p2/6/6/6/6/3pPp/6/3P2,
                pgfstyle=straightmove,
                markmoves={c1-c4,a4-c4}, 
                showmover=false,
                addpgf={\tikz[overlay]\draw[black,line width=0.05em,-Stealth,dashed](b9)--(f9);}
            ]
            \caption{OR gadget}
            \label{fig:OR}
        \end{center}\end{subfigure}
        \caption{OR and AND gadgets.} 
    }
    \end{center}
\end{figure}

Finally, we describe the OR gadget in Figure \ref{fig:OR}, where the output can activate if either of the inputs are activated. If White moves a soldier either up column c or in along row 4 to capture the black soldier at d4, then Black will move down from d9 to d4 to capture the soldier at e4. This frees up White to exit from b9.  Note that if both inputs are active in an OR gadget, the second won't be able to move in to the gadget.  That's appropriate functionality; the white soldier in that wire gadget won't be able to move, but that extra piece won't affect any other gadgets.

After all the variables have been chosen, it will be White's turn. (If necessary, we add a dummy variable so that there are an even number of variables and Black makes the last move on the variable gadgets.  This is a variable gadget, but without the white soldier in row 1.)  The freed-up white soldiers from the variables will each be able to move up and make capturing moves in other gadgets, activating them.  Each of those capture moves will create a chain sequence of back and forth capture options, starting with False.  If and only if the assignment of variables has made the original \rsBCL\ position winnable for the first player, then the True player will be able to activate their victory gadget.

The gadgets shown above are all that are needed to reduce from \rsBCL.  If White is able to activate the victory gadget, then they win.  Otherwise, after all activations are made, it will be White's turn, but they will be unable to move into a position that prevents Black from winning in Black's next two turns.
\end{proof}

\section{Conclusions and open questions}\label{sec:conclusion}

In this paper we demonstrate that our restricted version of \rsTafl\ is \cclass{PSPACE}-hard.  Our reduction relies heavily on the fact that players have to choose capturing moves (or a winning move) if such a move exists on their turn.  Indeed, if this is not the case, False is able to swiftly win the game as the Attackers (in the Defender Victory gadget) and we expect the same is true as the Defenders.

Since \rsFCT\ is a loopy game, its complexity is not necessarily in \cclass{PSPACE}.  This leaves the exact complexity of the game unresolved.  (Loopy rulesets are games where a position can be repeated during the course of play.)

\begin{Open}
Is \rsFCT\ \cclass{PSPACE}-complete?  Is it \cclass{EXPTIME}-hard?
\end{Open}

Naturally, we also want to know whether this could be a useful step towards proving the hardness of \rsTafl\ itself.  This result is evidence that it might be \cclass{PSPACE}-hard, but so far this larger problem remains unsolved. 
One may be able to devise positions that are easy to solve in \rsFCT\ but difficult without the forced-capture rule, and vice versa.

\begin{Open}
What is the computational complexity of \rsTafl?
\end{Open}

In the introduction to \rsTafl, we mention many variant rules.  It could be that these have a non-trivial influence on the computational complexity of the game.  Perhaps some variants are easy and others are hard.

\begin{Open}
What effects do the variant rules have on the complexity of \rsTafl?
\end{Open}

Finally, we consider the case where the Defender only has a king remaining with no other pieces.  Because of the very lopsided situation, we reason that this might even be in \cclass{NP}, because there might be a polynomial sequence of moves that either traps the king or reveals a path to escape.

\begin{Open}
What is the complexity of \rsTafl\ when the only defending piece remaining is the king? 
\end{Open}

As shown before, we found a winning strategy for the first player in \rsFCT\ starting from the traditional \ruleset{Brandubh} board.  This is evidence that  other historic starting positions may also contain succinct winning strategies.

\begin{Open}
What are the outcome classes of historic starting positions?
\end{Open}

\section{Acknowledgments}\label{sec:acknowledgments}
We thank the Ludcke family for their generous support.

\bibliographystyle{plainurl} 

\appendix

\section{Interaction between Gadgets}
\label{sec:appendix}

In this section, we show that there are no unexpected interactions between gadgets that give players additional move options than those described in the descriptions of the gadgets.  Recall that we insert the Wire gadget between any other gadgets, so we only need to show how the turn gadget interacts with all gadgets, including itself.

We begin by checking for interactions between the Wire gadget and itself, in Figure \ref{fig:WireWire}.  By inspection, we can see that there are no unexpected interactions between the two neighboring gadgets.

\begin{figure}[h!]
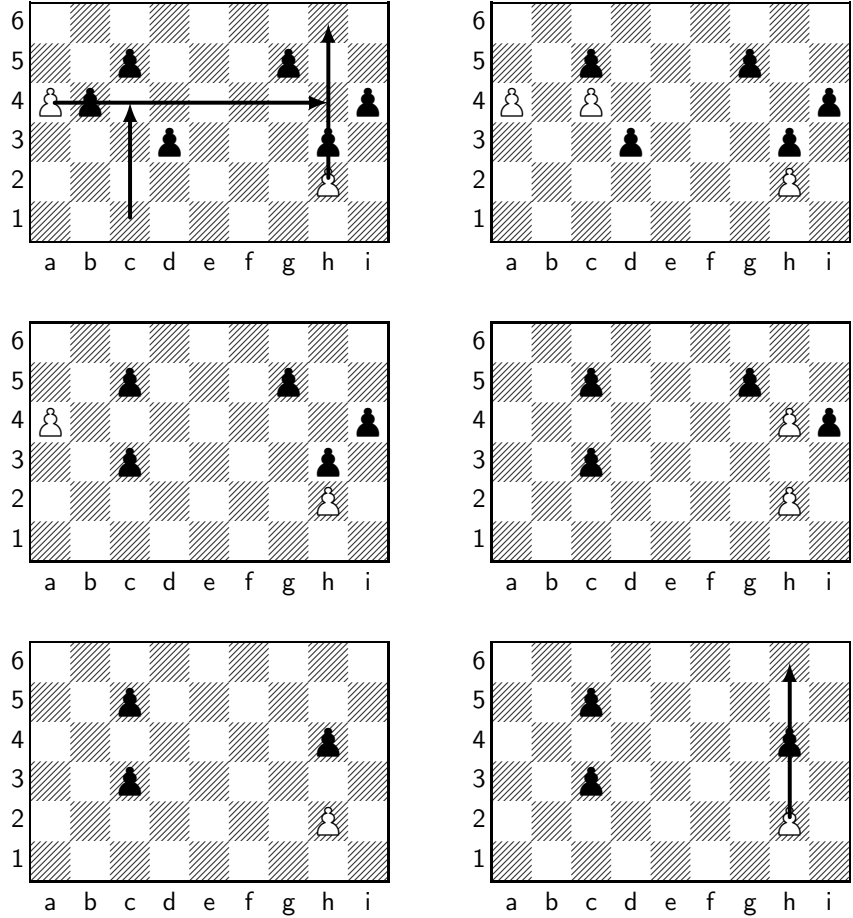
\begin{center}
    {\setchessboard{boardfontsize=15pt}
    \chessboard[
        maxfield=i6,
        setfen=9/2p3p2/Pp6p/3p3p1/7P1/9,
        pgfstyle=straightmove,
        markmoves={c1-c4, a4-h4, h2-h6},
        showmover=false
    ] \hspace{.1cm}
    \chessboard[
        maxfield=i6,
        setfen=9/2p3p2/P1P5p/3p3p1/7P1/9,
        pgfstyle=straightmove,
        showmover=false
    ]
    
    \chessboard[
        maxfield=i6,
        setfen=9/2p3p2/P7p/2p4p1/7P1/9,
        pgfstyle=straightmove,
        showmover=false
    ] \hspace{.1cm}
    \chessboard[
        maxfield=i6,
        setfen=9/2p3p2/7Pp/2p6/7P1/9,
        pgfstyle=straightmove,
        showmover=false
    ]
    
    \chessboard[
        maxfield=i6,
        setfen=9/2p6/7p1p/2p6/7P1/9,
        pgfstyle=straightmove,
        showmover=false
    ] \hspace{.1cm}
    \chessboard[
        maxfield=i6,
        setfen=9/2p6/7p1p/2p6/7P1/9,
        pgfstyle=straightmove,
        markmoves={h2-h6},
        showmover=false
    ]}
    \caption{Two wire gadgets adjacent to each other and the sequence of moves through them.}
    \label{fig:WireWire}
\end{center}\end{figure}

The remaining images in this appendix illustrate the same behavior when the Wire gadgets are appended to each other gadget.  Many figures here are transposed from the figures in the main paper for space and configuration reasons. Specific details for each are included in the captions.

\begin{figure}[h!]
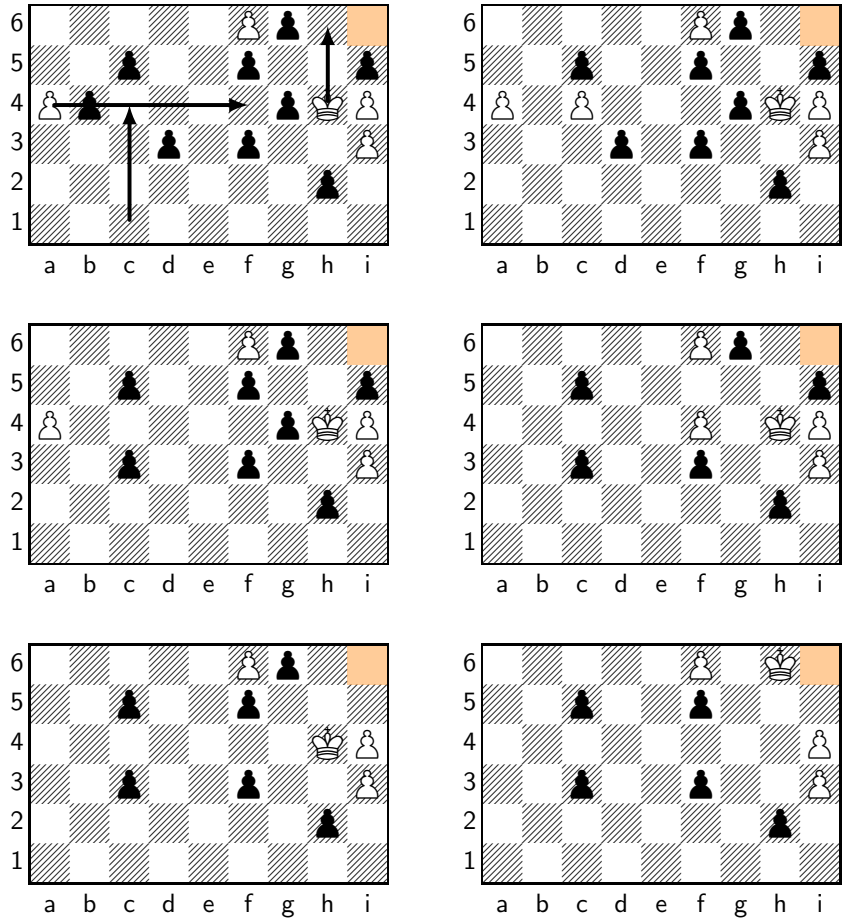
\begin{center}
    {\setchessboard{boardfontsize=15pt}
    \chessboard[
        maxfield=i6,
        setfen=5Pp2/2p2p2p/Pp4pKP/3p1p2P/7p1/9,
        pgfstyle=straightmove,
        markmoves={c1-c4, a4-f4, h4-h6},
        showmover=false,
        color=orange!80,
        pgfstyle=color, 
        opacity=.5,
        markregion={i6-i6}      
    ]\hspace{.1cm}
    \chessboard[
        maxfield=i6,
        setfen=5Pp2/2p2p2p/P1P3pKP/3p1p2P/7p1/9,
        pgfstyle=straightmove,
        showmover=false,
        color=orange!80,
        pgfstyle=color, 
        opacity=.5,
        markregion={i6-i6}      
    ]
    
    \chessboard[
        maxfield=i6,
        setfen=5Pp2/2p2p2p/P5pKP/2p2p2P/7p1/9,
        pgfstyle=straightmove,
        showmover=false,
        color=orange!80,
        pgfstyle=color, 
        opacity=.5,
        markregion={i6-i6}     
    ]\hspace{.1cm}
    \chessboard[
        maxfield=i6,
        setfen=5Pp2/2p5p/5P1KP/2p2p2P/7p1/9,
        pgfstyle=straightmove,
        showmover=false,
        color=orange!80,
        pgfstyle=color, 
        opacity=.5,
        markregion={i6-i6}      
    ]
    
    \chessboard[
        maxfield=i6,
        setfen=5Pp2/2p2p3/7KP/2p2p2P/7p1/9,
        pgfstyle=straightmove,
        showmover=false,
        color=orange!80,
        pgfstyle=color, 
        opacity=.5,
        markregion={i6-i6}      
    ]\hspace{.1cm}
    \chessboard[
        maxfield=i6,
        setfen=5P1K1/2p2p3/8P/2p2p2P/7p1/9,
        pgfstyle=straightmove,
        showmover=false,
        color=orange!80,
        pgfstyle=color, 
        opacity=.5,
        markregion={i6-i6}      
    ]}
    \caption{The defender's victory gadget with a wire leading in to it and the sequence of moves between them.  (Note: Black could move the pawn from c5 to f5 instead of i5, but the result is the same.)}
    \label{fig:WireDefenderVictory}
\end{center}\end{figure}

\begin{figure}[h!]
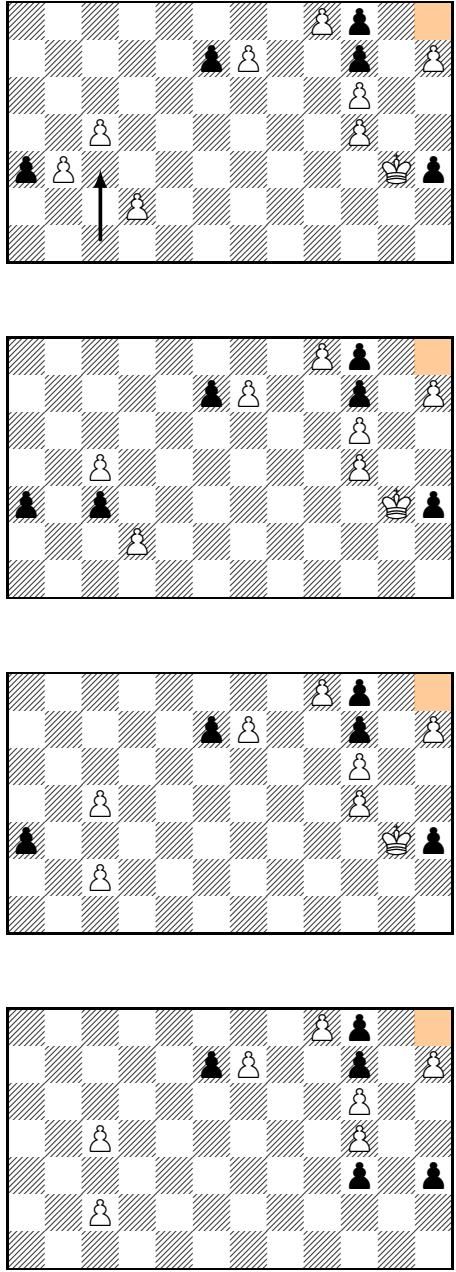
\begin{center}
    {\setchessboard{boardfontsize=14pt}
    \chessboard[
        maxfield=l7,
        setfen=8Pp2/5pP2p1P/9P2/2P6P2/pP8Kp/3P8/12,
        pgfstyle=straightmove,
        markmoves={c1-c3},
        showmover=false,
        hlabel=false,
        vlabel=false,
        color=orange!80,
        pgfstyle=color, 
        opacity=.5,
        markregion={l7-l7}      
    ]
    
    \chessboard[
        maxfield=l7,
        setfen=8Pp2/5pP2p1P/9P2/2P6P2/p1p7Kp/3P8/12,
        pgfstyle=straightmove,
        showmover=false,
        hlabel=false,
        vlabel=false,
        color=orange!80,
        pgfstyle=color, 
        opacity=.5,
        markregion={l7-l7}
    ]
    
    \chessboard[
        maxfield=l7,
        setfen=8Pp2/5pP2p1P/9P2/2P6P2/p9Kp/2P9/12,
        pgfstyle=straightmove,
        showmover=false,
        hlabel=false,
        vlabel=false,
        color=orange!80,
        pgfstyle=color, 
        opacity=.5,
        markregion={l7-l7}      
    ]
    
    \chessboard[
        maxfield=l7,
        setfen=8Pp2/5pP2p1P/9P2/2P6P2/9p1p/2P9/12,
        pgfstyle=straightmove,
        showmover=false,
        hlabel=false,
        vlabel=false,
        color=orange!80,
        pgfstyle=color, 
        opacity=.5,
        markregion={l7-l7}      
    ]}
    \caption{The attacker's victory gadget with a wire leading in to it and the sequence of moves leading to the capture of the King.}
    \label{fig:WireAttackerVictory}
\end{center}\end{figure}

\begin{figure}[h!]
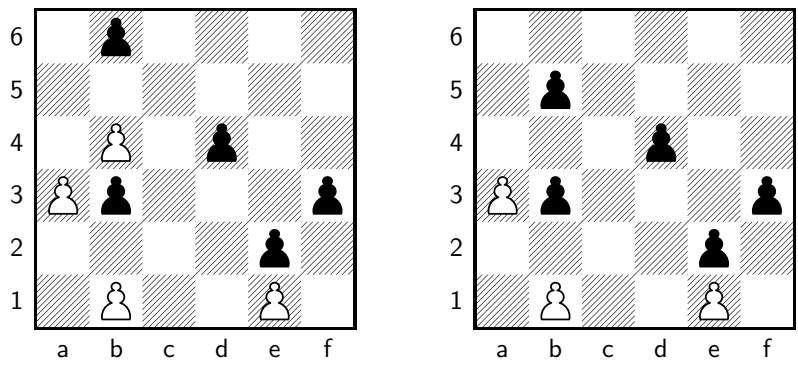
\begin{center}
    \chessboard[
        maxfield=f6,
        setfen=1p4/6/1P1p2/Pp3p/4p1/1P2P1,
        pgfstyle=straightmove,
        showmover=false      
    ]\hspace{.1cm}
    \chessboard[
        maxfield=f6,
        setfen=6/1p4/3p2/Pp3p/4p1/1P2P1,
        pgfstyle=straightmove,
        showmover=false      
    ]
    \caption{A Variable gadget with a Wire, and Black's move.}
    \label{fig:VariableWireBlack}
\end{center}\end{figure}

\begin{figure}[h!]\begin{center}
    \chessboard[
        maxfield=f6,
        setfen=1p4/6/1P1p2/Pp3p/4p1/1P2P1,
        pgfstyle=straightmove,
        showmover=false        
    ]\hspace{.1cm}
    \chessboard[
        maxfield=f6,
        setfen=1p4/6/1P1p2/P4p/1P2p1/4P1,
        pgfstyle=straightmove,
        showmover=false      
    ]
    
    \chessboard[
        maxfield=f6,
        setfen=1p4/6/1P1p2/4Pp/1P4/4P1,
        pgfstyle=straightmove,
        showmover=false      
    ]\hspace{.1cm}
    \chessboard[
        maxfield=f6,
        setfen=1p4/6/1P4/3p1p/1P4/4P1,
        pgfstyle=straightmove,
        showmover=false      
    ]
    
    \chessboard[
        maxfield=f6,
        setfen=1p4/6/1P4/3p1p/1P4/6,
        pgfstyle=straightmove,
        showmover=false      
    ]
    \caption{A variable gadget with a Wire, and the sequence of moves after White's initial move.}
    \label{fig:VariableWireWhite}
\end{center}\end{figure}

\begin{figure}[h!]\begin{center}
    {\setchessboard{boardfontsize=15pt}
    \chessboard[
        maxfield=m9,
        setfen=7p5/5Pp6/8p4/13/2p3p6/Pp6p1p2/3p1P1p4p/7P3p1/11P1,
        pgfstyle=straightmove,
        markmoves={c1-c4,a4-h4, f3-l3, l1-l9, h2-h8, f8-m8},
        showmover=false
    ]}
    
    {\setchessboard{boardfontsize=11pt}
    \chessboard[
        maxfield=m9,
        setfen=7p5/5Pp6/8p4/13/2p3p6/P1P5p1p2/3p1P1p4p/7P3p1/11P1,
        pgfstyle=straightmove,
        showmover=false,
        hlabel=False,
        vlabel=False
    ]\hspace{.1cm}
    \chessboard[
        maxfield=m9,
        setfen=7p5/5Pp6/8p4/13/2p3p6/P7p1p2/2p2P1p4p/7P3p1/11P1,
        pgfstyle=straightmove,
        showmover=false,
        hlabel=False,
        vlabel=False
    ]
    
    \chessboard[
        maxfield=m9,
        setfen=7p5/5Pp6/8p4/13/2p3p6/7Pp1p2/2p2P6p/7P3p1/11P1,
        pgfstyle=straightmove,
        showmover=false,
        hlabel=False,
        vlabel=False
    ]\hspace{.1cm}
    \chessboard[
        maxfield=m9,
        setfen=7p5/5Pp6/8p4/13/2p10/6p1p1p2/2p2P6p/7P3p1/11P1,
        pgfstyle=straightmove,
        showmover=false,
        hlabel=False,
        vlabel=False
    ]
    
    \chessboard[
        maxfield=m9,
        setfen=7p5/5P1P5/8p4/13/2p10/6p1p1p2/2p2P6p/11p1/11P1,
        pgfstyle=straightmove,
        showmover=false,
        hlabel=False,
        vlabel=False
    ]\hspace{.1cm}
    \chessboard[
        maxfield=m9,
        setfen=7p5/5P7/7p5/13/2p10/6p1p1p2/2p2P6p/11p1/11P1,
        pgfstyle=straightmove,
        showmover=false,
        hlabel=False,
        vlabel=False
    ]
    }
    \caption{A Fanout gadget with the three wire gadgets attached and a sequence of moves through it (aside from the final two moves).}
    \label{fig:FanoutWithWires} 
\end{center}\end{figure}

\begin{figure}[h!]\begin{center}
    {\setchessboard{boardfontsize=20pt}
    \chessboard[
        maxfield=o14,
        setfen=13p1/11Pp2/14p/15/2p4p7/Pp5P1p5/3p1P1pp4p1/15/13P1/15/6p8/8p6/7p7/7P7,
        pgfstyle=straightmove,
        markmoves={a3-h3, h1-h7, c1-c9, a9-h9, n6-n13, m13-o13},
        showmover=false      
    ]}
    
    {\setchessboard{boardfontsize=10pt}
    \chessboard[
        maxfield=o14,
        setfen=13p1/11Pp2/14p/15/2p4p7/P1P4P1p5/3p1P1pp4p1/15/13P1/15/6p8/8p6/7p7/7P7,
        pgfstyle=straightmove,
        showmover=false,
        hlabel=False,
        vlabel=False
    ]
    \chessboard[
        maxfield=o14,
        setfen=13p1/11Pp2/14p/15/2p4p7/P6P1p5/2p2P1pp4p1/15/13P1/15/6p8/8p6/7p7/7P7,
        pgfstyle=straightmove,
        showmover=false,
        hlabel=False,
        vlabel=False
    ]
    }
    \caption{An AND gadget with the three Wire gadgets attached and a the first two of a sequence of moves through it.}
    \label{fig:AndWithWires1} 
\end{center}\end{figure}

\begin{figure}[h!]
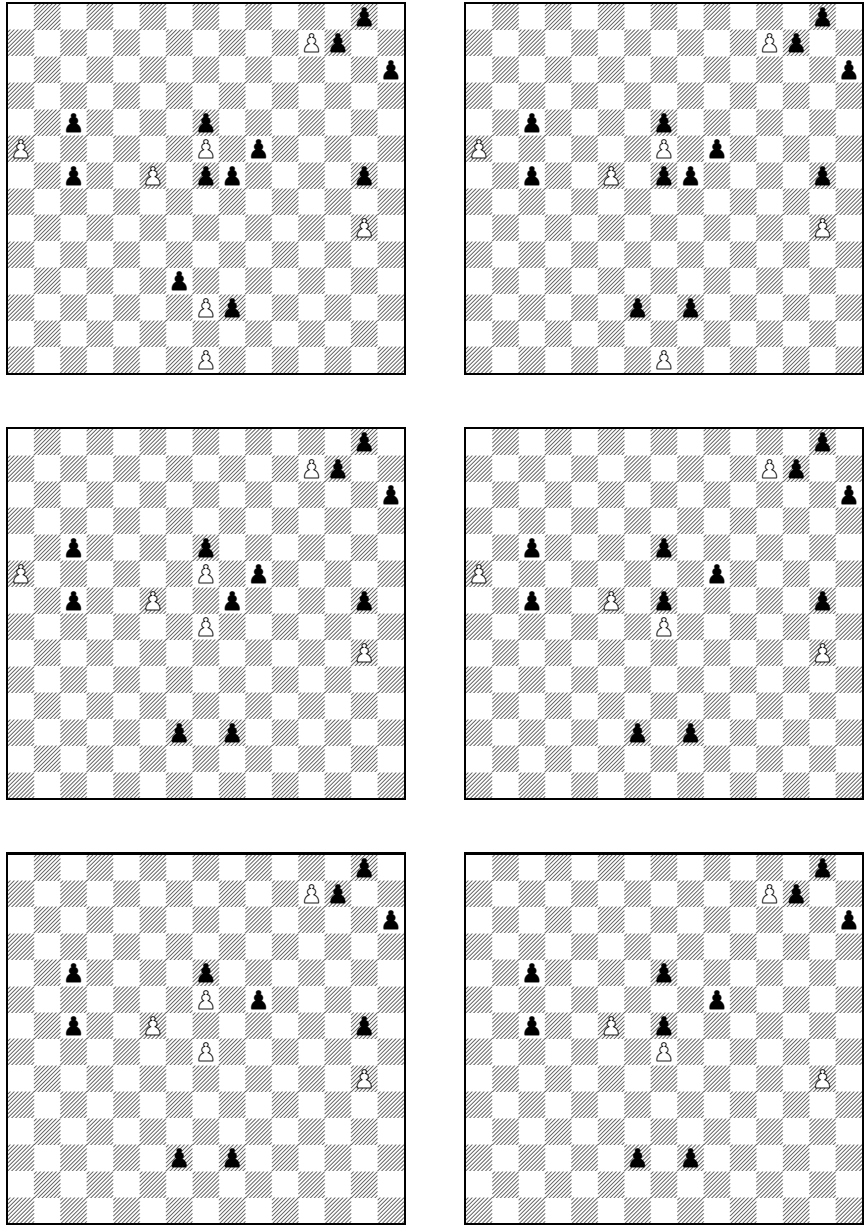
\begin{center}
    {\setchessboard{boardfontsize=10pt}
    \chessboard[
        maxfield=o14,
        setfen=13p1/11Pp2/14p/15/2p4p7/P6P1p5/2p2P1pp4p1/15/13P1/15/6p8/7Pp6/15/7P7,
        pgfstyle=straightmove,
        showmover=false,
        hlabel=False,
        vlabel=False
    ]
    \chessboard[
        maxfield=o14,
        setfen=13p1/11Pp2/14p/15/2p4p7/P6P1p5/2p2P1pp4p1/15/13P1/15/15/6p1p6/15/7P7,
        pgfstyle=straightmove,
        showmover=false,
        hlabel=False,
        vlabel=False
    ]
    
    \chessboard[
        maxfield=o14,
        setfen=13p1/11Pp2/14p/15/2p4p7/P6P1p5/2p2P2p4p1/7P7/13P1/15/15/6p1p6/15/15,
        pgfstyle=straightmove,
        showmover=false,
        hlabel=False,
        vlabel=False
    ]
    \chessboard[
        maxfield=o14,
        setfen=13p1/11Pp2/14p/15/2p4p7/P8p5/2p2P1p5p1/7P7/13P1/15/15/6p1p6/15/15,
        pgfstyle=straightmove,
        showmover=false,
        hlabel=False,
        vlabel=False
    ]
    
    \chessboard[
        maxfield=o14,
        setfen=13p1/11Pp2/14p/15/2p4p7/7P1p5/2p2P7p1/7P7/13P1/15/15/6p1p6/15/15,
        pgfstyle=straightmove,
        showmover=false,
        hlabel=False,
        vlabel=False
    ]
    \chessboard[
        maxfield=o14,
        setfen=13p1/11Pp2/14p/15/2p4p7/9p5/2p2P1p7/7P7/13P1/15/15/6p1p6/15/15,
        pgfstyle=straightmove,
        showmover=false,
        hlabel=False,
        vlabel=False
    ]
    }
    \caption{Six further moves through the AND gadget with Wire gadgets attached the remaining three moves begin with the pawn at n6 moving up to capture.}
    \label{fig:AndWithWires2} 
\end{center}\end{figure}

\begin{figure}[h!]\begin{center}
    {\setchessboard{boardfontsize=20pt}
    \chessboard[
        maxfield=o9,
        setfen=7p7/5Pp8/8p6/15/2p9p2/Pp12p/3p2ppp4p1/7P1p3P1/7p2P4,
        pgfstyle=straightmove,
        markmoves={c1-c4, a4-h4, h4-h8, f8-o8, h4-n4, n2-n9},
        showmover=false      
    ]}
    
    {\setchessboard{boardfontsize=10pt}
    \chessboard[
        maxfield=o9,
        setfen=7p7/5Pp8/8p6/15/2p9p2/P1P11p/3p2ppp4p1/7P1p3P1/7p2P4,
        pgfstyle=straightmove,
        showmover=false,
        hlabel=False,
        vlabel=False
    ]
    \chessboard[
        maxfield=o9,
        setfen=7p7/5Pp8/8p6/15/2p9p2/P13p/2p3ppp4p1/7P1p3P1/7p2P4,
        pgfstyle=straightmove,
        showmover=false,
        hlabel=False,
        vlabel=False
    ]
    
    \chessboard[
        maxfield=o9,
        setfen=7p7/5Pp8/8p6/15/2p9p2/7P6p/2p3p1p4p1/7P1p3P1/7p2P4,
        pgfstyle=straightmove,
        showmover=false,
        hlabel=False,
        vlabel=False
    ]
    \chessboard[
        maxfield=o9,
        setfen=7p7/5Pp8/8p6/15/2p9p2/7P6p/2p4pp4p1/9p3P1/7p2P4,
        pgfstyle=straightmove,
        showmover=false,
        hlabel=False,
        vlabel=False
    ]
    }
    \caption{A Choice gadget with the three Wire gadgets attached and a the first four moves.  From this last position, White can choose to move up or to the right.}
    \label{fig:ChoiceWithWires1} 
\end{center}\end{figure}

\begin{figure}[h!]
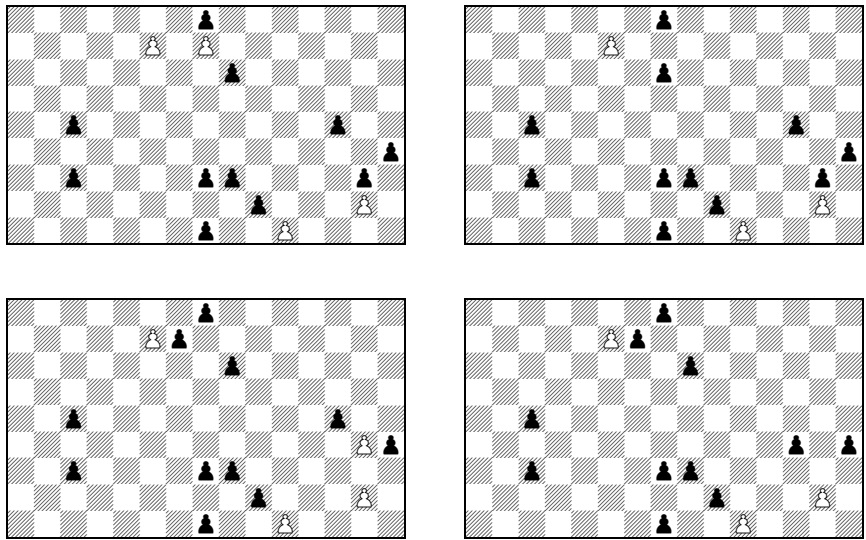
\begin{center}
    {\setchessboard{boardfontsize=10pt}
    \chessboard[
        maxfield=o9,
        setfen=7p7/5P1P7/8p6/15/2p9p2/14p/2p4pp4p1/9p3P1/7p2P4,
        pgfstyle=straightmove,
        showmover=false,
        hlabel=False,
        vlabel=False
    ]
    \chessboard[
        maxfield=o9,
        setfen=7p7/5P9/7p7/15/2p9p2/14p/2p4pp4p1/9p3P1/7p2P4,
        pgfstyle=straightmove,
        showmover=false,
        hlabel=False,
        vlabel=False
    ]
    
    \chessboard[
        maxfield=o9,
        setfen=7p7/5Pp8/8p6/15/2p9p2/13Pp/2p4pp6/9p3P1/7p2P4,
        pgfstyle=straightmove,
        showmover=false,
        hlabel=False,
        vlabel=False
    ]
    \chessboard[
        maxfield=o9,
        setfen=7p7/5Pp8/8p6/15/2p12/12p1p/2p4pp6/9p3P1/7p2P4,
        pgfstyle=straightmove,
        showmover=false,
        hlabel=False,
        vlabel=False
    ]
    }
    \caption{A Choice gadget with the three Wire gadgets attached.  The top two are moves choosing the top direction.  The bottom two are two moves in the right-wise direction.}
    \label{fig:ChoiceWithWires2} 
\end{center}\end{figure}

\begin{figure}[h!]\begin{center}
    {\setchessboard{boardfontsize=20pt}
    \chessboard[
        maxfield=n14,
        setfen=11p2/5P1p5p/12p1/12P1/14/2p11/Pp5pPp4/3p10/7P6/14/5p8/7p6/6p7/6P7,
        pgfstyle=straightmove,
        markmoves={a3-g3, g1-g8, c1-c8, a8-g8, f13-m13, m11-m14},
        showmover=false      
    ]}
    
    {\setchessboard{boardfontsize=10pt}
    \chessboard[
        maxfield=n14,
        setfen=11p2/5P1p5p/12p1/12P1/14/2p11/Pp5pPp4/3p10/7P6/14/5p8/6Pp6/14/6P7,
        showmover=false,
        hlabel=False,
        vlabel=False
    ]
    \setchessboard{boardfontsize=10pt}
    \chessboard[
        maxfield=n14,
        setfen=11p2/5P1p5p/12p1/12P1/14/2p11/Pp5pPp4/3p10/7P6/14/14/5p1p6/14/6P7,
        showmover=false,
        hlabel=False,
        vlabel=False
    ]
    }
    \caption{An OR gadget with the three Wire gadgets attached, and the first two moves from the lower gadget.}
    \label{fig:OrWithWires1} 
\end{center}\end{figure}

\begin{figure}[h!]
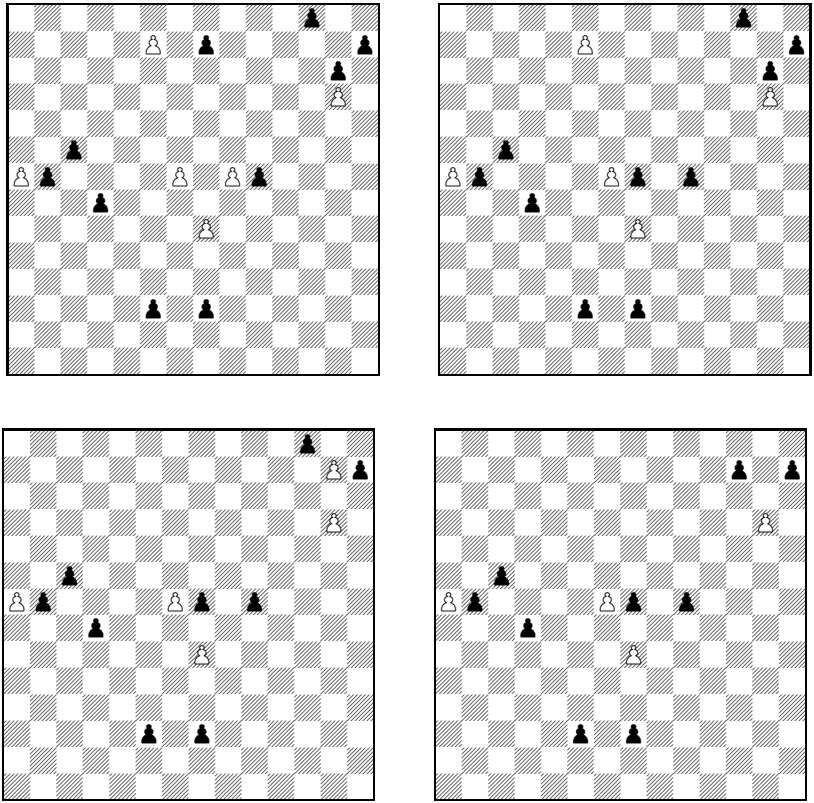
\begin{center}
    {\setchessboard{boardfontsize=10pt}
    \chessboard[
        maxfield=n14,
        setfen=11p2/5P1p5p/12p1/12P1/14/2p11/Pp4P1Pp4/3p10/7P6/14/14/5p1p6/14/14,
        showmover=false,
        hlabel=False,
        vlabel=False
    ]
    \chessboard[
        maxfield=n14,
        setfen=11p2/5P7p/12p1/12P1/14/2p11/Pp4Pp1p4/3p10/7P6/14/14/5p1p6/14/14,
        showmover=false,
        hlabel=False,
        vlabel=False
    ]
    
    \chessboard[
        maxfield=n14,
        setfen=11p2/12Pp/14/12P1/14/2p11/Pp4Pp1p4/3p10/7P6/14/14/5p1p6/14/14,
        showmover=false,
        hlabel=False,
        vlabel=False
    ]
    \chessboard[
        maxfield=n14,
        setfen=14/11p1p/14/12P1/14/2p11/Pp4Pp1p4/3p10/7P6/14/14/5p1p6/14/14,
        showmover=false,
        hlabel=False,
        vlabel=False
    ]
    }
    \caption{Four more moves on the OR with Wire after the first White play on the lower Wire.  Notice that if White can activate the left-hand Wire, Black will be able to respond, but White will not have a further play.}
    \label{fig:OrWithWires2} 
\end{center}\end{figure}

\begin{figure}[h!]
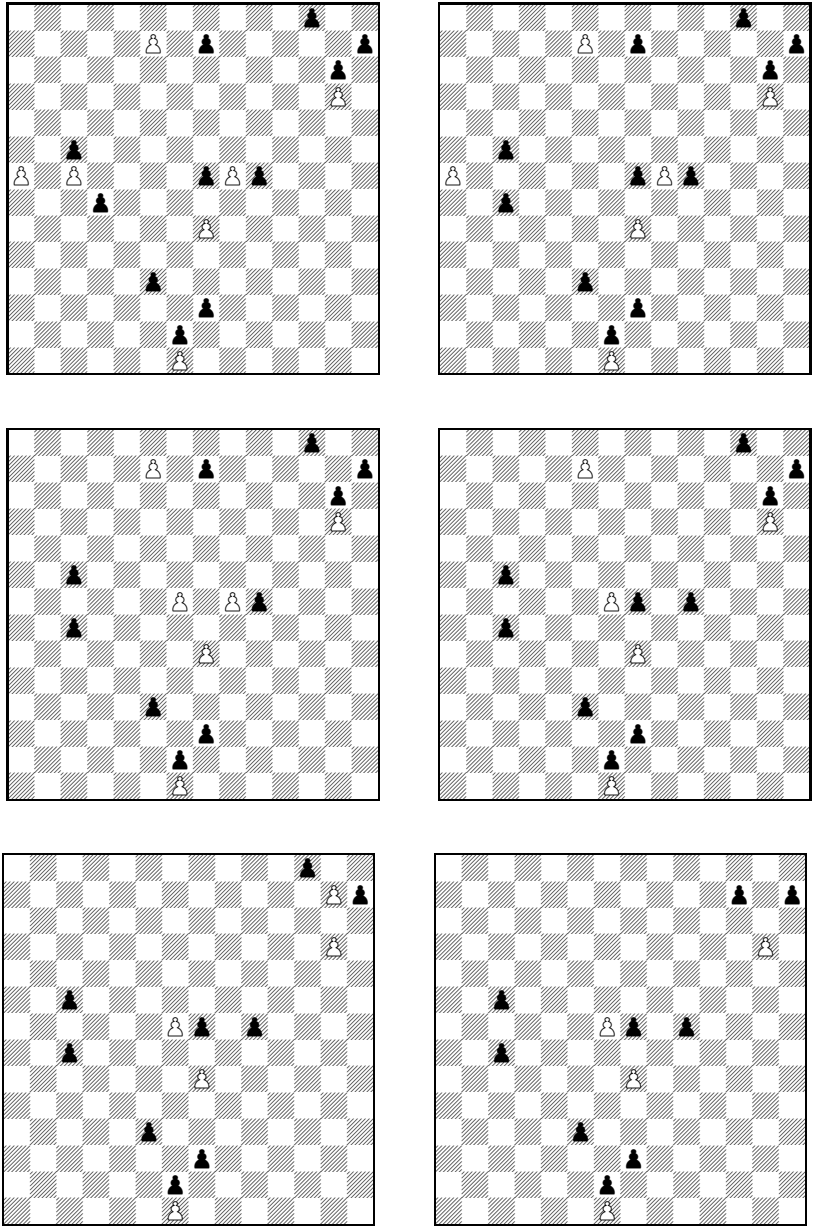
\begin{center}
    {\setchessboard{boardfontsize=10pt}
    \chessboard[
        maxfield=n14,
        setfen=11p2/5P1p5p/12p1/12P1/14/2p11/P1P4pPp4/3p10/7P6/14/5p8/7p6/6p7/6P7,
        showmover=false,
        hlabel=False,
        vlabel=False
    ]
    \chessboard[
        maxfield=n14,
        setfen=11p2/5P1p5p/12p1/12P1/14/2p11/P6pPp4/2p11/7P6/14/5p8/7p6/6p7/6P7,
        showmover=false,
        hlabel=False,
        vlabel=False
    ]
    
    \chessboard[
        maxfield=n14,
        setfen=11p2/5P1p5p/12p1/12P1/14/2p11/6P1Pp4/2p11/7P6/14/5p8/7p6/6p7/6P7,
        showmover=false,
        hlabel=False,
        vlabel=False
    ]
    \chessboard[
        maxfield=n14,
        setfen=11p2/5P7p/12p1/12P1/14/2p11/6Pp1p4/2p11/7P6/14/5p8/7p6/6p7/6P7,
        showmover=false,
        hlabel=False,
        vlabel=False
    ]
    
    \chessboard[
        maxfield=n14,
        setfen=11p2/12Pp/14/12P1/14/2p11/6Pp1p4/2p11/7P6/14/5p8/7p6/6p7/6P7,
        showmover=false,
        hlabel=False,
        vlabel=False
    ]
    \chessboard[
        maxfield=n14,
        setfen=14/11p1p/14/12P1/14/2p11/6Pp1p4/2p11/7P6/14/5p8/7p6/6p7/6P7,
        showmover=false,
        hlabel=False,
        vlabel=False
    ]
    }
    \caption{The six moves on an OR with Wire starting with the White move on the left-hand Wire.  Notice that if White can activate the lower Wire, Black will be able to respond, but White will not have a further play.}
    \label{fig:OrWithWires3} 
\end{center}\end{figure}

\end{document}